\newtheorem{corollary}{Corollary}
\titlespacing*{\subsection}{0pt}{0.3\baselineskip}{0.1\baselineskip}
\titlespacing*{\section}{0pt}{0.6\baselineskip}{0.5\baselineskip}
\newcommand{\citet}[1]{\citeauthor{#1}~\shortcite{#1}}
\def\eqref#1{equation~\ref{#1}}
\def\1{\bm{1}}
\DeclareMathAlphabet{\mathsfit}{\encodingdefault}{\sfdefault}{m}{sl}
\SetMathAlphabet{\mathsfit}{bold}{\encodingdefault}{\sfdefault}{bx}{n}
\newcommand{\R}{\mathbb{R}}
\title{Emergent Prosociality in Multi-Agent Games Through Gifting}
\author{
Woodrow Z. Wang$^{1}$\footnote{First three authors have contributed equally and listed randomly.}\and
Mark Beliaev$^{2*}$\and
Erdem B\i y\i k$^{1*}$\and
Daniel A. Lazar$^{2}$\and\\
Ramtin Pedarsani$^{2}$\And
Dorsa Sadigh$^{1}$\\
\affiliations
$^1$Stanford University, 
$^2$University of California, Santa Barbara\\
\emails
\{wwang153, ebiyik, dorsa\}@stanford.edu, \{mbeliaev, dlazar, ramtin\}@ucsb.edu
}
\begin{document}

\maketitle

\begin{abstract}
  Coordination is often critical to forming prosocial behaviors -- behaviors that increase the overall sum of rewards received by all agents in a multi-agent game. However, state of the art reinforcement learning algorithms often suffer from converging to socially less desirable equilibria when multiple equilibria exist. Previous works address this challenge with explicit reward shaping, which requires the strong assumption that agents can be forced to be prosocial. We propose using a less restrictive peer-rewarding mechanism, \emph{gifting}, that guides the agents toward more socially desirable equilibria while allowing agents to remain selfish and decentralized. Gifting allows each agent to give some of their reward to other agents. We employ a theoretical framework that captures the benefit of gifting in converging to the prosocial equilibrium by characterizing the equilibria's basins of attraction in a dynamical system. With gifting, we demonstrate increased convergence of high risk, general-sum coordination games to the prosocial equilibrium both via numerical analysis and experiments.
\end{abstract}

\section{Introduction}

\begin{figure}[t]
    \centering
    \includegraphics[width=\columnwidth]{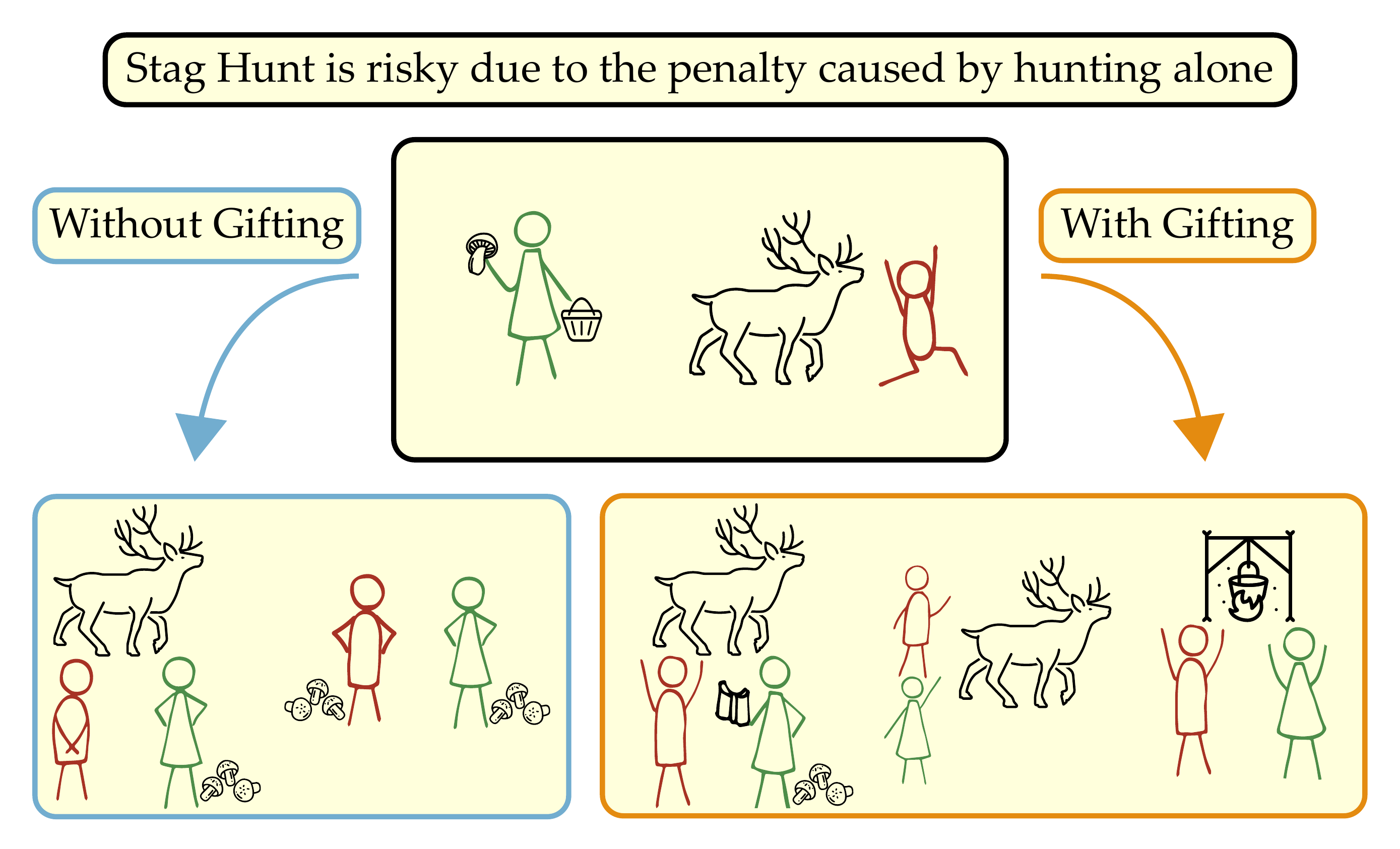}
    \vspace{-22px}
    \caption{Progression of agents in a Stag Hunt game with and without gifting. Agents choose either to hunt or forage. Hunting provides more food, but requires coordination: an agent is severely penalized for hunting alone. Foraging guarantees a small amount of food. Without gifting, agents often learn to forage. With gifting, they gift each other early in training, which mitigates risk, and they learn to hunt together in a prosocial manner.}
    
    \label{fig:front_fig}
    \vspace{-14px}
\end{figure}

Reinforcement learning (RL) has shown great success in training agents to solve many human-relevant tasks~\cite{mnih2013atari,sutton1999}. In addition, there has been increased interest in leveraging RL techniques in decentralized multi-agent problems, motivated by outstanding performance in two-player zero-sum games such as AlphaZero~\cite{alphazero}. However, simply applying multi-agent RL algorithms to train self-interested agents in a decentralized fashion does not always perform well. Specifically, win-win strategies -- strategies that are beneficial for all agents -- are often challenging to achieve in more advanced settings, such as general-sum games where win-win outcomes are only possible through coordination~\cite{matignon_laurent_le}. 

Coordination is often coupled with risk. In the real world, there are many applications where there is a safe action that leads to guaranteed but lower rewards, and a risky action that leads to higher rewards only if agents cooperate, such as alleviating traffic congestion~\cite{biyik2018altruistic,lazar2019learning}, sustainably sharing limited resources~\cite{hughes2018inequity}, and altruistic human-robot interaction~\cite{shirado2018locally}. In a game-theoretic framework, the class of Stag Hunt games is a well-known instance of the tradeoff between social cooperation and safety. In Stag Hunt, two players must independently decide between foraging mushrooms and hunting a stag. If both players choose to hunt the stag, they succeed and are rewarded greatly, while if only one of them goes hunting, they return empty-handed and injured. On the other hand, foraging mushrooms guarantees a meal for the night, although not as satisfying (as shown in Fig. ~\ref{fig:front_fig}). When confronted with this coordination problem, state of the art RL algorithms and humans alike often choose the safer, socially less desirable option of foraging, instead of the riskier, prosocial option of hunting~\cite{vanhyuck,peysakhovich2018prosocial}. This is because uncertainty in the behavior of other players leads one to favor safer alternatives. Ideally, one would prefer reaching the most socially desirable equilibrium -- the \emph{prosocial equilibrium}\footnote{When multiple equilibria are equally prosocial, we refer to reaching any of them as reaching the prosocial equilibrium.} -- to allow all agents to maximize their rewards.

Previous attempts to address the problem of reaching the prosocial equilibrium focus on using explicit reward shaping to force agents to be prosocial, such as by making agents care about the rewards of their partners \cite{peysakhovich2018prosocial}. This requires the strong assumption that a central agent, e.g., a supervisor, can coerce the agents to be altruistic and to care about maximizing the total social utility. We are interested in a less restrictive setting where we do not assume access to a centralized supervisor, and the agents retain their self-interest and only care about maximizing their own received reward -- while the objective is still to increase the probability of reaching the prosocial equilibrium.

\emph{Our key insight is that gifting, a peer-rewarding mechanism, can be used during training as a decentralized teaching mechanism, mitigating risk and encouraging prosocial behavior without explicit reward shaping.} 
Gifting was first introduced by \citet{lupu2020gifting} and is an instance of a larger class of algorithms that extend an agent's action space, providing them with a way to give part of their reward to other agents through their actions. 
In contrast to centralized reward shaping, which requires an external actor to force agents to be prosocial apriori, gifting leaves it up to the agents themselves to use the new actions, enabling self-interested agents to decide \emph{when} and \emph{how} to use the gifting actions.



One key advantage of gifting is that it can be used at training time as a behavior shaping mechanism by allowing agents to take risk-mitigating actions, i.e., agent $1$ can gift agent $2$ in order to decrease agent $2$'s risk and incentivize agent $2$ to take a riskier action.  We prove zero-sum gifting does not introduce new pure-strategy Nash equilibria in one-shot normal-form games. 
On the other hand, gifting can introduce new, complex learned behaviors in a repeated normal-form game: new equilibria may be introduced where one agent's policy is contingent on receiving a gift. In this paper, we only demonstrate preliminary results in repeated normal-form games, and focus our analysis on one-shot normal-form games to carefully examine the effects of gifting as a transient risk-mitigating action used only at train time.





Our main contributions in this paper are as follows:
\begin{itemize}[nosep]

\item We propose using a zero-sum gifting mechanism to encourage prosocial behavior in coordination games while allowing agents to remain decentralized and self-interested.

\item We provide insights on the effects of zero-sum gifting for $N$-player one-shot normal-form games by formally showing it does not introduce new equilibria and characterizing conditions under which gifting is beneficial.



\item We experimentally show that zero-sum gifting increases the probability of convergence of selfish agents to the prosocial equilibrium in general-sum coordination games when the agents are trained with Deep Q-Networks \cite{mnih2013atari}.


\end{itemize}

\section{Related Work}




\noindent\textbf{Game Theory:}
There has been significant recent work attempting to reach the prosocial equilibrium in coordination games.
Several works use tools from both multi-agent RL and game theory to investigate multi-agent learning in cooperative games. \citet{balcan2015learning} study learning cooperative games from a set of coalition samples. More related to our work, \citet{panait2008theoretical} investigate the effect of leniency, an agent's tolerance of other agents' non-cooperative actions, from an evolutionary game theory perspective. In our work, we investigate non-cooperative games in the scope of gifting. Gifting allows agents to take a new action that may lower the risk their opponent experiences, whereas leniency allows agents to ignore rewards they received in the past. 


\smallskip





\smallskip

\noindent\textbf{Coordinated Exploration and Centralized Training:} In multi-agent games, researchers have tried coordinating the exploration of agents to find the most prosocial Nash equilibrium \cite{iqbal2020coordinated} or other equilibrium concepts~\cite{beliaev2020emergent}. While coordinated exploration improves exploration efficiency, it requires communication among agents, which is not available in decentralized settings. Centralized training methods with decentralized control have also been proposed as a way to learn multi-agent policies \cite{lowe2017maddpg,foerster2017coma}. However, similar to coordinated exploration, these approaches require communication among agents during training, which is often not applicable in practice.


\smallskip
\noindent\textbf{Opponent Modeling:} Reasoning about opponent behavior can lead to more complex interactions. With opponent modeling, agents can estimate their opponents' policies, parameters, or updates in order to inform their own learning~\cite{foerster2017lola,sadigh2018planning,sadigh2016planning,shih2021on,xie2020learning,zhu2020multi}. While opponent modeling has shown promising results, it often provides approximation solutions that can be suboptimal. \citet{letcher2018stable} have shown that many opponent modeling methods might prevent convergence to equilibria.

\smallskip
\noindent\textbf{Explicit Reward Shaping:} To encourage coordination, researchers have explicitly shaped the reward of agents, such as by encoding inequity aversion~\cite{hughes2018inequity}. \citet{peysakhovich2018prosocial} define each agent's reward function to be the sum of all agents' rewards in the environment. Although successful, these approaches require the strong assumption that an agent's reward function can be externally modified and that the agent can be forced to be prosocial and care about maximizing the total utility of all agents. 

\smallskip
\noindent\textbf{Gifting:} Gifting is a recently proposed method that extends the action space of learning agents to allow rewards to be transferred among agents~\cite{lupu2020gifting}. It simply extends each agent's action space with gifting actions, but does not require that the agents use the new gifting actions in any particular way. In our work, we leverage the idea of gifting for improving coordination in general-sum games and examine the effects of the added gifting actions both analytically and experimentally.  

\section{Problem Definition}\label{sec:problem_definition}

We are interested in developing and analyzing algorithms that encourage agents to exhibit prosocial behavior in multi-agent environments with multiple equilibria. We formalize this problem for general-sum one-shot coordination games.

General-sum one-shot coordination games are a class of games with multiple pure-strategy Nash equilibria. \emph{Pure-strategy Nash equilibria} (PNE) are game-theoretic solution concepts, in which each agent has no incentive to unilaterally deviate from a deterministic strategy given the strategy of the other agents. When applying RL techniques to these games, multi-agent systems reach one of the PNE if the agents converge to deterministic policies \cite{Harsanyi}, although not always the best PNE for all agents. Ideally, they would converge to the payoff-dominant PNE, in which at least one player receives a strictly higher payoff and no player would receive a higher payoff in another equilibrium~\cite{Harsanyi}. If such an equilibrium exists, then it is \emph{prosocial} because the sum of rewards for all agents is larger than that of any other equilibrium.

In practice, we are interested in reaching the prosocial equilibrium; however, this is not trivial in settings such as coordination games, where some of the equilibria are risk-dominant, i.e., they have lower but more guaranteed payoffs even if the other agents do not coordinate. These equilibria have larger basins of attraction, so uncertainty in other players' behaviors would lead one to choose the risk-dominant strategy~\cite{Harsanyi}. In settings where both payoff-dominant and risk-dominant equilibria exist, it is difficult to reach the prosocial equilibrium, as agents must be willing to take risks and cooperate with each other. Stag Hunt games are an excellent example of such a setting. Many recent multi-agent RL works have studied Stag Hunt games in depth~\cite{peysakhovich2018prosocial,nica2017learning,leibo2017multi}, as well as works that attempt to build AI systems coordinating with humans~\cite{shum2019theory}. 




The payoff matrix for a general two-action, two-player game is shown in Table~\ref{tab:payoff_matrix}.


\begin{table}[ht]
\vspace{-6px}
\begin{center}
\begin{tabular}{ | c | c | c | } 
\hline
 & Action 1 & Action 2 \\ 
\hline
Action 1 & $a,A$ & $b,B$ \\ 
\hline
Action 2 & $c,C$ & $d,D$ \\ 
\hline
\end{tabular}
\vspace{-10px}
\end{center}
\caption{Payoff matrix of a general two-player game}
\vspace{-6px}
\label{tab:payoff_matrix}
\end{table}

In a coordination game, multiple PNE exist, and they occur when players \textit{coordinate} by choosing the same action. This restricts the PNE to lie on the main diagonal of the payoff matrix. Formally, in coordination games, we have $a>c, A>B, d>b, D>C$ ~\cite{Harsanyi}. These inequalities place the PNE on the main diagonal, satisfying the condition that agents must coordinate on the same action in order to reach a PNE. Furthermore, (Action 1, Action 1) is the payoff-dominant equilibrium if $a\geq d, A\geq D$, and at least one of the inequalities is strict. The specific values of these payoffs will determine what sub-class of coordination games is being played: Pure Coordination, Bach or Stravinsky (BoS), Assurance, or Stag Hunt. We are most interested in the Stag Hunt setting because of the difficulty it presents in reaching the prosocial equilibrium. Detailed descriptions of the other sub-class games are in the Appendix. 

\subsection{Stag Hunt}\label{subsec:stag_hunt}
\begin{minipage}{.53\linewidth}
  \centering
    \begin{itemize}
        \item $a>d, A>D$
        \item $a=A,d=D,c=B,$\\$C=b=r$
        \item $a-c < d-r$
    \end{itemize}
\end{minipage}%
\begin{minipage}{.42\linewidth}
  \centering
    \begin{tabular}{ | c | c | c | } 
    \hline
     & Hunt & Forage\\ 
    \hline
    Hunt & $2,2$ & $r,1$ \\ 
    \hline
    Forage & $1,r$ & $1,1$ \\ 
    \hline
\end{tabular}
\end{minipage}
\vspace{5px}


Stag Hunt is a two-player game with a risk-dominant equilibrium at (Forage, Forage) and a payoff-dominant equilibrium at (Hunt, Hunt). The payoff-dominant equilibrium is more prosocial and provides each agent with higher reward, but contains risk in the case where the agents do not coordinate. The risk-dominant equilibrium is safer, since the reward is less contingent on the other agent's cooperation. 



A Nash equilibrium risk dominates another if it has a strictly higher Nash product ~\cite{Harsanyi}. The Nash product of an equilibrium is the product of deviation losses of both players. As shown in the third condition above $(a-c < d-r)$, all of the parameters influence risk when comparing Nash products of the equilibria. However, for simplicity, we characterize the risk in Stag Hunt with the parameter $r$, the reward for hunting alone, and we keep our analysis focused on $r$ while holding all other values in the payoff matrix constant. In our setting of Stag Hunt, the equilibrium at (Hunt, Hunt) has a Nash product of $(2-1)^2=1$, while the equilibrium at (Forage, Forage) has a Nash product of $(1-r)^2$. With $r$ strictly negative, as $r$ decreases, the risk monotonically increases since the Nash product of (Forage, Forage) grows larger. Thus, we refer to $r$ as the \emph{risk-varying parameter}. In our analysis, we use three versions of Stag Hunt with $r=-2,-6,-10$, referred to as the \emph{low}, \emph{medium}, and \emph{high} risk settings, respectively. If the corresponding setting is not mentioned, then we default to medium risk.


\subsection{Zero-Sum Gifting}
To increase the probability of reaching the prosocial equilibrium in settings with multiple equilibria, we investigate adding zero-sum gifting actions based on the work by \citet{lupu2020gifting}. With zero-sum gifting, an agent may decide to give some of its reward to the others, preserving the total reward of agents. Although our main interest is with coordination games, the method can be generally applied to all normal-form games and is formalized below. 


Take any normal-form game $M$ with a finite set of $N$ players, each with a set of actions (strategies) $S_i$, and payoff function $\mu_i:S_1\times S_2\times\ldots\times S_N\to \R$ where $i\in\{1,2,\dots,N\}$. We denote the subset $\mathbf{S_{PNE}}\subseteq S_1\times S_2\times\ldots\times S_N$ as the set of PNE actions in $M$:
\begin{equation}
    \begin{split}
        &\boldsymbol{s}\in \mathbf{S_{PNE}} \text{ if and only if }\\
        &\forall i\in\{1,2,\ldots,N\} \text{ and  }\forall s_i' \in S_i:
        \mu_i(\boldsymbol{s})\geq \mu_i(s_i',\boldsymbol{s}_{-i}),
    \end{split}
\end{equation}
where $\boldsymbol{s}_{-i}$ denotes the set of actions of all agents other than agent $i$. To introduce gifting, we define a new finite set of actions $G_i$, and function $\sigma_i:G_1\times G_2\times\ldots\times G_N\to \R$ for each player where:
\begin{equation}
    \begin{split}
        &0\in G_i \;,\\
        &\forall g_i\in G_i: g_i\geq 0 \;,\\
        &\sigma_i(\boldsymbol{g}) = -g_i + \frac{1}{N-1}\sum_{j\in -i}g_j \;.
    \end{split}
\end{equation}
Here, $\sigma_i$ formulates how the payoff of agent $i$ changes by the gifting actions of all agents, $\boldsymbol{g}$.

We then formulate the new game $\bar{M}$ with gifting actions. In $\bar{M}$, the set of actions for each player is $\bar{S}_i=S_i\times G_i$, and the corresponding payoffs functions are $\bar{\mu}_i:\bar{S}_1\times \bar{S}_2\times\ldots\times \bar{S}_N\to \R$ where: 
\begin{equation}
    \begin{split}
        &\forall \boldsymbol{\bar{s}}\in \bar{S}_1\times \bar{S}_2\times\ldots\times \bar{S}_N \text{ and } \forall i\in\{1,2,\dots,N\}:\\
        &\bar{\mu}_i(\boldsymbol{\bar{s}})=\mu_i(\boldsymbol{s}) + \sigma_i(\boldsymbol{g}) \text{ where }\boldsymbol{\bar{s}}=(\boldsymbol{s},\boldsymbol{g})\,.
    \end{split}
\end{equation}

Since $\sum_{i=1}^N \sigma_i(\boldsymbol{g}) = 0$, introducing the gifting actions into the game does not change the total reward among all agents.


Having formalized zero-sum gifting for any normal-form game, we now proceed with analysis and experiments to highlight its benefits. We focus our experiments on settings where we add zero-sum gifting actions with each $G_i = \{0, \gamma\}$.


\section{Analysis of Zero-Sum Gifting in One-Shot Normal-Form Games}
\label{sec:analysis}

We analyze the effect of zero-sum gifting on the equilibria of one-shot normal-form games in Section \ref{subsec:equil}. In Section \ref{subsec:behavior}, we characterize the behavior of learning agents in Stag Hunt with gifting. Specifically, we formulate the learning process of the agents as a dynamical system and show that gifting increases the basin of attraction of the prosocial equilibrium.

\subsection{Effects of Gifting on Equilibria}
\label{subsec:equil}
In this section, we state our main theoretical results. We provide the proofs for both Lemma~\ref{lemma1} and Proposition~\ref{proposition1} in the Appendix. We show that agents gift each other $0$ reward in the PNE of $\bar{M}$. Moreover, $\mathbf{\bar{S}_{PNE}}$, the set of PNE in $\bar{M}$, has a one-to-one correspondence with $\mathbf{S_{PNE}}$, the PNE in the original game $M$. Together, these imply having gifting actions does not change the equilibrium behavior of the agents.

\begin{restatable}{lemma}{firstlemma}\label{lemma1}
In any one-shot normal-form game extended with zero-sum gifting actions and for any $s_i\in S_i$, $(s_i,g_i)$ is strictly dominated by $(s_i,0)$ if $g_i\neq0$, meaning $(s_i,0)$ always leads to higher payoff for agent $i$ than $(s_i,g_i)$ for any action profile $\boldsymbol{\bar{s}}_{-i}$ by other agents.
\end{restatable}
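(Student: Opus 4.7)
The plan is an elementary one-step calculation using the payoff decomposition $\bar{\mu}_i(\boldsymbol{\bar{s}}) = \mu_i(\boldsymbol{s}) + \sigma_i(\boldsymbol{g})$. I would fix any $s_i \in S_i$, any $g_i \in G_i$ with $g_i \neq 0$, and any action profile $\boldsymbol{\bar{s}}_{-i} = (\boldsymbol{s}_{-i}, \boldsymbol{g}_{-i})$ of the remaining players, then compare $\bar{\mu}_i((s_i, g_i), \boldsymbol{\bar{s}}_{-i})$ with $\bar{\mu}_i((s_i, 0), \boldsymbol{\bar{s}}_{-i})$. When agent $i$'s action changes from $(s_i, g_i)$ to $(s_i, 0)$, the underlying-game component $\mu_i(s_i, \boldsymbol{s}_{-i})$ is unchanged because $s_i$ and $\boldsymbol{s}_{-i}$ are held fixed, so the payoff change is entirely due to the $\sigma_i$ term.

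Using the definition $\sigma_i(\boldsymbol{g}) = -g_i + \frac{1}{N-1} \sum_{j \in -i} g_j$, the summation depends only on $\boldsymbol{g}_{-i}$ and is also unchanged. Subtracting then yields
\begin{equation*}
\bar{\mu}_i((s_i, 0), \boldsymbol{\bar{s}}_{-i}) - \bar{\mu}_i((s_i, g_i), \boldsymbol{\bar{s}}_{-i}) \;=\; g_i.
\end{equation*}
Since $G_i$ contains only nonnegative values and $g_i \neq 0$ by hypothesis, we have $g_i > 0$, which gives the strict inequality.

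There is essentially no obstacle here; the lemma follows directly from the linearity of $\sigma_i$ in the gifts and from the fact that the coefficient of an agent's own gift in its own $\sigma_i$ is identically $-1$, independent of the other agents' gifts or of the underlying-game actions. The economic content is simply that in a one-shot interaction a self-interested agent always prefers to retain its reward rather than donate any of it, since the cost of donating falls entirely on the donor while the benefit accrues to others. This is precisely why gifting will have to act as a transient, train-time shaping mechanism rather than as an equilibrium behavior, which sets up the subsequent dynamical-systems analysis of basins of attraction.
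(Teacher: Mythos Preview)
Your proposal is correct and essentially identical to the paper's own proof: both fix $\boldsymbol{\bar{s}}_{-i}$, expand $\bar{\mu}_i$ via the decomposition $\mu_i(\boldsymbol{s}) + \sigma_i(\boldsymbol{g})$, and observe that switching from $(s_i,g_i)$ to $(s_i,0)$ changes the payoff by exactly $g_i > 0$. The only cosmetic difference is that you write the conclusion as a subtraction yielding $g_i$, whereas the paper writes out the two payoffs and compares them directly.
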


\setcounter{theorem}{0} 
\begin{corollary}\label{corollary1}
In the set of PNE of any normal-form game extended with zero-sum gifting actions, $\mathbf{\bar{S}_{PNE}}$, all agents gift $0$ reward.
\begin{equation}
\begin{split}
    &\forall\boldsymbol{\bar{s}}\in\mathbf{\bar{S}_{PNE}} \text{ and } \forall i \in\{1,2,\dots,N\}:\\
    &g_i=0\text{ where } \boldsymbol{\bar{s}}=(\boldsymbol{s},\boldsymbol{g})
\end{split}
\end{equation}
\end{corollary}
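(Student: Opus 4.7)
The plan is to obtain Corollary~1 as an essentially immediate consequence of Lemma~1. Lemma~1 establishes that, fixing any base strategy $s_i$ and any profile of extended actions $\boldsymbol{\bar{s}}_{-i}$ for the other agents, the extended action $(s_i, 0)$ strictly dominates $(s_i, g_i)$ whenever $g_i \neq 0$. The pure-strategy Nash equilibrium condition forbids any profitable unilateral deviation, so a strictly dominated extended action cannot appear in any PNE.

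Concretely, I would argue by contradiction. Suppose some $\boldsymbol{\bar{s}} = (\boldsymbol{s}, \boldsymbol{g}) \in \mathbf{\bar{S}_{PNE}}$ has $g_i \neq 0$ for some agent $i$. Consider the unilateral deviation to $\bar{s}_i' = (s_i, 0)$, which keeps agent $i$'s base action fixed and only nullifies the gift. Applying Lemma~1 with the opponents' joint extended action held at $\boldsymbol{\bar{s}}_{-i}$ yields $\bar{\mu}_i(\bar{s}_i', \boldsymbol{\bar{s}}_{-i}) > \bar{\mu}_i(\boldsymbol{\bar{s}})$, which contradicts the PNE inequality $\bar{\mu}_i(\boldsymbol{\bar{s}}) \geq \bar{\mu}_i(\bar{s}_i', \boldsymbol{\bar{s}}_{-i})$. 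Hence every agent must use $g_i = 0$ at every PNE, which is exactly the claim of the corollary.

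The only mildly delicate point is verifying that Lemma~1 is being invoked in the form needed here: it must supply strict dominance against every fixed profile of the opponents' extended actions (including their gift choices), not merely against their base actions. This is already the content of the lemma, since $\sigma_i(\boldsymbol{g})$ depends on $g_j$ for $j \neq i$ only additively and independently of $g_i$, so the gain from switching $g_i$ to $0$ is $g_i$ regardless of what the other agents gift. With this observation, the corollary reduces to a single strict-dominance argument, and I do not anticipate any further obstacle.
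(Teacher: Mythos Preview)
Your proposal is correct and matches the paper's approach: the corollary is stated immediately after Lemma~\ref{lemma1} as a direct consequence, relying on the standard fact that a strictly dominated action cannot appear in any PNE. Your contradiction argument simply spells out this one-line implication, and the observation about the additive separability of $\sigma_i$ in $g_i$ is exactly what the paper uses in the proof of Lemma~\ref{lemma1}.
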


\vspace{-5px}
\begin{restatable}{proposition}{firstproposition}\label{proposition1}
For any normal-form game $M$ extended to $\bar{M}$ with zero-sum gifting, there exists a unique one-to-one mapping between their corresponding sets of PNE strategy profiles $\mathbf{S_{PNE}}$ and $\mathbf{\bar{S}_{PNE}}$, such that if an action set is a PNE in $M$, then appending $0$-gifting actions gives a PNE in $\bar{M}$:
\begin{equation}
\begin{split}
    &\bigtimes_{i=1}^N s_i \in \mathbf{S_{PNE}} \iff \bigtimes_{i=1}^N (s_i,0) \in \mathbf{\bar{S}_{PNE}}, \text{ and }\\
    &\bigtimes_{i=1}^N (s_i,g_i) \in \mathbf{\bar{S}_{PNE}} \implies \forall i\in\{1,2,\dots,N\}: g_i=0\;.
\end{split}
\end{equation}
\end{restatable}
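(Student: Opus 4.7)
The plan is to decompose the proposition into three separate claims and assemble them: (i) the second displayed assertion---that every PNE of $\bar{M}$ has $g_i=0$ for every $i$---is exactly Corollary~\ref{corollary1} and can be invoked directly; (ii) the forward implication $\boldsymbol{s}\in\mathbf{S_{PNE}}\Rightarrow(\boldsymbol{s},\boldsymbol{0})\in\mathbf{\bar{S}_{PNE}}$; and (iii) the reverse implication $(\boldsymbol{s},\boldsymbol{0})\in\mathbf{\bar{S}_{PNE}}\Rightarrow\boldsymbol{s}\in\mathbf{S_{PNE}}$. Once (i)--(iii) are in hand, the \emph{one-to-one} part of the statement is essentially free: (i) forces every PNE of $\bar{M}$ to have the form $(\boldsymbol{s},\boldsymbol{0})$, so the map $\boldsymbol{s}\mapsto(\boldsymbol{s},\boldsymbol{0})$ is injective, and (ii)--(iii) together make it a bijection between $\mathbf{S_{PNE}}$ and $\mathbf{\bar{S}_{PNE}}$.

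For the forward direction, I would fix $\boldsymbol{s}\in\mathbf{S_{PNE}}$ and verify that no agent $i$ has a profitable deviation $(s_i',g_i')\in\bar{S}_i$ against the opposing profile $(\boldsymbol{s}_{-i},\boldsymbol{0}_{-i})$. Split on $g_i'$. If $g_i'\neq 0$, Lemma~\ref{lemma1} says $(s_i',g_i')$ is strictly dominated by $(s_i',0)$ regardless of the opponents' play, so it suffices to rule out deviations of the form $(s_i',0)$. When every agent plays zero gift, $\sigma_i$ vanishes and $\bar{\mu}_i$ collapses to $\mu_i$, so the question reduces to whether $\mu_i(s_i',\boldsymbol{s}_{-i})>\mu_i(\boldsymbol{s})$ for some $s_i'$, which is impossible because $\boldsymbol{s}$ is a PNE of $M$.

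For the reverse direction, I would argue by contrapositive: if $\boldsymbol{s}\notin\mathbf{S_{PNE}}$, some agent $i$ has $s_i'\in S_i$ with $\mu_i(s_i',\boldsymbol{s}_{-i})>\mu_i(\boldsymbol{s})$; then $(s_i',0)$ is a strictly profitable deviation against $(\boldsymbol{s}_{-i},\boldsymbol{0}_{-i})$ in $\bar{M}$ (again using that $\bar{\mu}_i=\mu_i$ when every gift is zero), contradicting that $(\boldsymbol{s},\boldsymbol{0})$ is a PNE of $\bar{M}$.

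I do not anticipate a substantial obstacle here: once Lemma~\ref{lemma1} is available, the whole argument rests on the observation that $\bar{\mu}_i$ decouples into a strategic part $\mu_i$ and an additive gifting part $\sigma_i$ whose dependence on agent $i$'s own gift $g_i$ has strictly negative coefficient. The one step requiring care is the forward direction, where a deviation in $\bar{M}$ has the richer form $(s_i',g_i')$ rather than just $s_i'$; one must first peel off the $g_i'\neq 0$ possibilities with Lemma~\ref{lemma1} before reducing to the $M$-level PNE condition. That bookkeeping is the only place I would slow down.
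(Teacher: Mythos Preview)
Your proposal is correct and follows essentially the same approach as the paper: invoke Corollary~\ref{corollary1} for the second displayed assertion, then establish the biconditional by using Lemma~\ref{lemma1} to strip off any nonzero gifting component and reduce deviations in $\bar{M}$ to deviations in $M$ via $\bar{\mu}_i=\mu_i$ when all gifts are zero. Your treatment is in fact slightly more careful than the paper's, which explicitly verifies only the forward inclusion and leaves the reverse implication (your item (iii)) implicit.
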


\vspace{-5px}
Proposition~\ref{proposition1} is a desirable result, because it means that introducing gifting actions to one-shot normal-form games will not change the final behavior of the learning agents in the equilibria. Thus, we can carefully investigate gifting's effect on reaching the original equilibria of the game. Moreover, we can view these extended actions as transient to the environment -- the gifting actions are only seen at training time. 


While not changing the final equilibrium behavior, introducing gifting actions increases the frequency of converging to a more desirable PNE in a dynamic learning environment: prosocial behavior is observed more often after agents converge to an equilibrium. Hence, it is reasonable to extend agents' action spaces with gifting actions specifically in scenarios with higher risk, as we can promote prosocial behavior without directly shaping rewards in a centralized manner.




\subsection{Effects of Gifting on the Agents' Behavior}
\label{subsec:behavior}
Since Stag Hunt is the most interesting and difficult game among the games we introduced in Section~\ref{sec:problem_definition}, we now analyze the behavior of learning agents in Stag Hunt with zero-sum gifting. By Proposition~\ref{proposition1}, we know they will converge to either (Hunt, Hunt) or (Forage, Forage) at PNE\footnote{This game also has a mixed-strategy Nash equilibrium. However, that equilibrium is unstable, so learning agents do not converge there in practice. Thus, we exclude the analysis of this equilibrium.} -- they will not give gifts. In this section, we analyze the deciding factors that lead agents to a specific PNE.


Our idea is to formulate the game and the learning process as a dynamical system. To do this, we first define the policies of the two agents, $\pi_{\boldsymbol{x}}$ and $\pi_{\boldsymbol{y}}$, respectively. Here, $\boldsymbol{x}$ and $\boldsymbol{y}$ parameterize the policies. While the policies may have various forms, such as neural networks, what is important is how we define the probability of each action. For that, we let $\boldsymbol{x}$ and $\boldsymbol{y}$ be the logits of a softmax policy:

\begin{equation}
    \begin{split}
        \pi_{\boldsymbol{x}}(s^{(j)}) &= \frac{\exp(x_{j'})}{\sum_{j'=1}^4 \exp(x_{j'})}\;,
    \end{split}
\end{equation}
and similarly for $\pi_{\boldsymbol{y}}$, where $s^{(1)} = \textrm{Hunt}$, $s^{(2)} = \textrm{Forage}$, $s^{(3)} = \textrm{Hunt + Gift}$, $s^{(4)} = \textrm{Forage + Gift}$. As an example, here $\boldsymbol{x}$ and $\boldsymbol{y}$ can be the outputs of a neural network. It can be noted that only the relative differences of the parameters are important: adding a scalar to all parameters of an agent does not change the policy. The two PNEs of the game then correspond to $\forall i\in\{2,3,4\}: x_1-x_i=y_1-y_i=+\infty$ for the prosocial and $\forall i\in\{1,3,4\}: x_2-x_i=y_2-y_i=+\infty$ for the risk-dominant equilibrium.

We then write the expected reward of the agents as:
\begin{equation}
    \mathbb{E}[\bar{\mu}_i] = \sum_{\bar{s}_1\in \bar{S}_1}\sum_{\bar{s}_2\in \bar{S}_2} \pi_{\boldsymbol{x}}(\bar{s}_1)\pi_{\boldsymbol{y}}(\bar{s}_2)\bar{\mu}_i(\bar{s}_1,\bar{s}_2)\;.
\end{equation}


While RL methods employ various methods to estimate value functions or policy gradients, true gradients can be closely estimated here by collecting large batches of data at every learning iteration, as this is a one-shot game. Thus, we use the true gradients:
\begin{equation}
    \frac{\partial \mathbb{E}[\bar{\mu}_i]}{\partial x_j} = \sum_{\bar{s}_1\in \bar{S}_1}\sum_{\bar{s}_2\in \bar{S}_2} \frac{\partial \pi_{\boldsymbol{x}}(\bar{s}_1)}{\partial x_j}\pi_{\boldsymbol{y}}(\bar{s}_2)\bar{\mu}_i(\bar{s}_1,\bar{s}_2)\;,
\end{equation}
and similarly for $\frac{\partial \mathbb{E}[\bar{\mu}_i]}{\partial y_j}$. Since both agents are only self-interested and want to maximize their reward, they will update their policies following these gradients.

This formulation leads to an $8$-dimensional autonomous dynamical system (system with no input) with state $\boldsymbol{z}=[\boldsymbol{x}^\top, \boldsymbol{y}^\top]^\top$, and
\begin{equation}
    \dot{\boldsymbol{z}} = f(\boldsymbol{z}) = \left[\frac{\partial\mathbb{E}[\bar{\mu}_1]}{\partial\boldsymbol{x}^\top},\frac{\partial\mathbb{E}[\bar{\mu}_2]}{\partial\boldsymbol{y}^\top}\right]^\top
\end{equation}
We visualize the phase portraits of this dynamical system in Appendix~\ref{sec:gradients}.

While our formulation so far in this section is general and can be applied to any 2-player 4-action (including gifting) normal-form games, we now focus on Stag Hunt. For the remainder of this section, we take $a=A=2$, $r=b=C=-6$, $B=c=1$, $d=D=1$ (the same payoff matrix as in Section~\ref{subsec:stag_hunt} with medium risk) and gift value $\gamma=10$. Similarly, we formulate the original game without gifting as a dynamical system.



We then want to compute the basins of attraction of the equilibria, i.e., the initial states of the system that lead to that specific equilibrium. This is possible, because we already know the stable equilibria of the dynamical systems -- they are the PNEs of the corresponding normal-form games.

To this end, Fig.~\ref{fig:basins_of_attraction} shows the ratio of initial states that reach the prosocial equilibrium when the relative differences of gifting parameters with respect to $x_2$ and $y_2$ are taken as uniformly spaced values in $[-3,3]$\footnote{As only the differences between parameters are important, we vary the gifting parameters with respect to $x_2$ and $y_2$.}. The left heatmap shows the two basins of attraction without gifting. Since the gifting parameters are irrelevant in this setting, the map is binary. The right heatmap corresponds to the extended game with zero-sum gifting actions. The blue line shows the boundary for the game without gifting for comparison. Overall, this shows gifting is indeed beneficial for getting prosocial behavior in the equilibrium.

\begin{figure}[ht]
    \centering
    \vspace{-10px}
    \includegraphics[width=\linewidth]{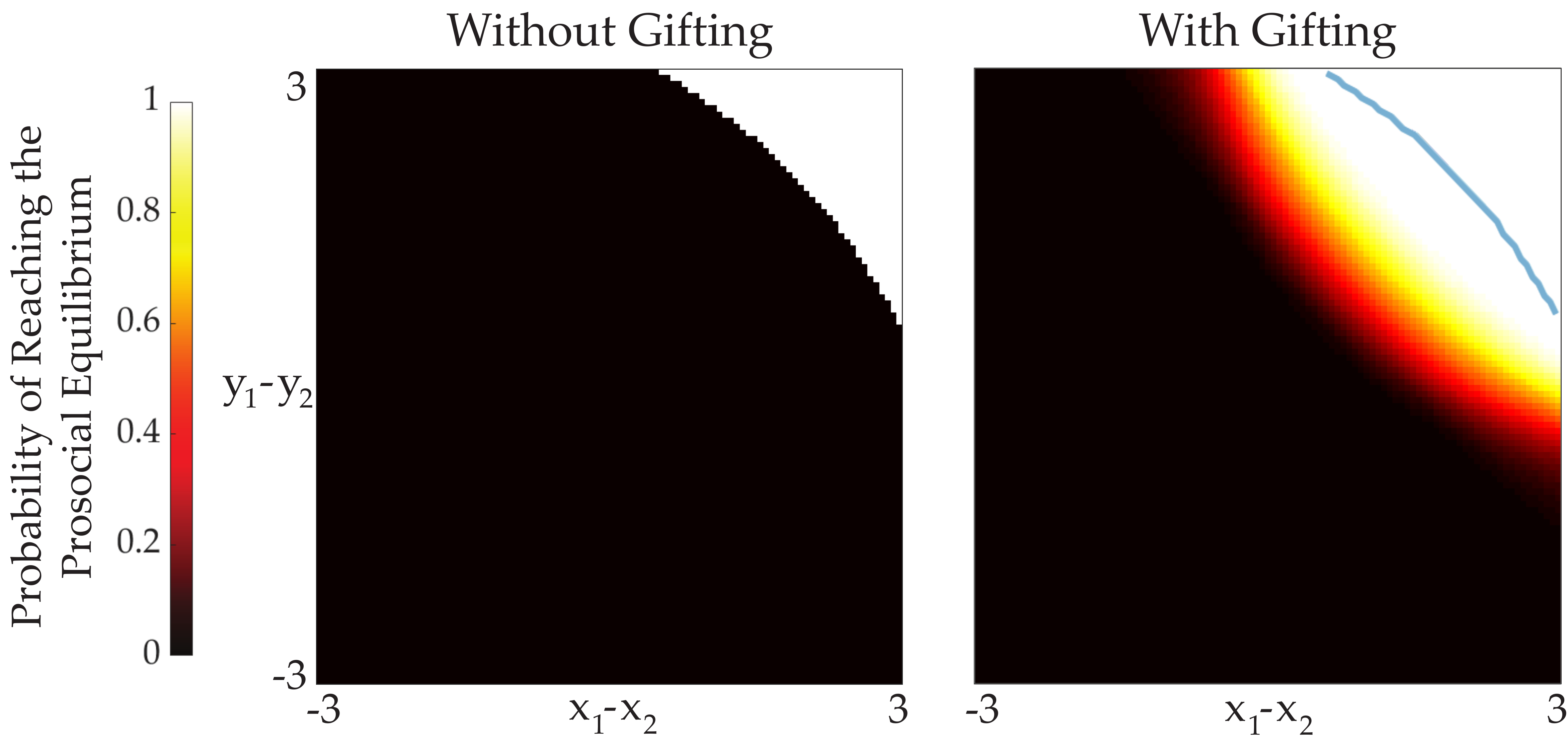}
    \vspace{-15px}
    \caption{Heatmaps show the probability of reaching the prosocial equilibrium when the policy parameters associated with gifting ($x_3-x_2,x_4-x_2,y_3-y_2,y_4-y_2$) are taken uniformly from $[-3,3]$. The blue curve in the right heatmap shows the boundary without gifting for comparison.}
    \vspace{-10px}
    \label{fig:basins_of_attraction}
\end{figure}

By repeating the same analysis of basin of attraction for varying $\gamma\in\{1,2,\dots,20\}$ and $r\in\{-10,-6,-2\}$, we analyze how often the system converges to the prosocial equilibrium under different conditions. Fig.~\ref{fig:varying_r} suggests that while gifting is always helpful, its benefits become more significant when agents are allowed to gift higher amounts.

\begin{figure}[ht]
    \centering
    \includegraphics[width=\linewidth]{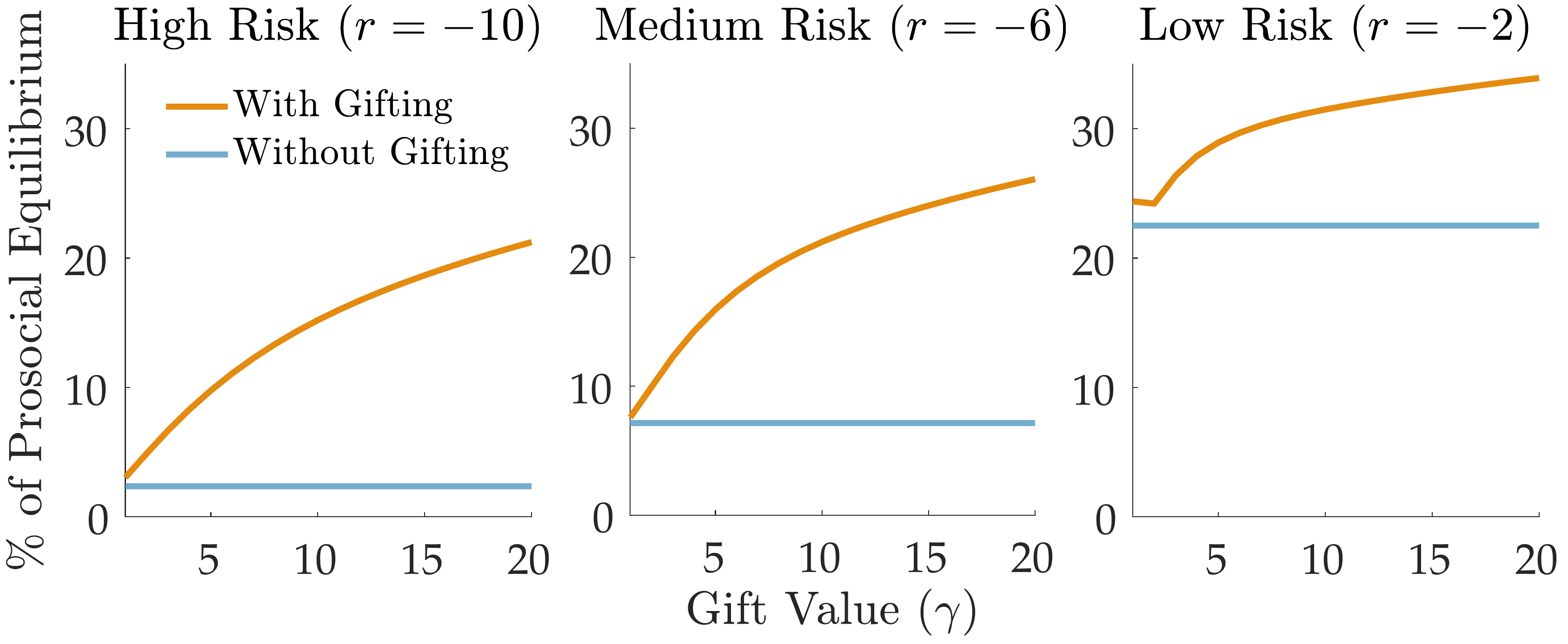}
    \vspace{-18px}
    \caption{Frequency of reaching the prosocial equilibrium under varying risk and gift $\gamma$ amounts. The relative differences of gifting parameters with respect to $x_2$ and $y_2$ are taken as uniformly spaced values in $[-3,3]$ to compute the frequencies.}
    \vspace{-7px}
    \label{fig:varying_r}
\end{figure}

In the next section, we validate these results, which we obtained by assuming access to the true gradients, in more realistic learning settings where the payoffs are unknown to agents.

\section{Experiments}
\label{sec:experiments}

We first describe the environments used in our experiments and implementation details. We then present experiment results to support our analysis.


\subsection{Environments}

\noindent\textbf{Two-Player Environments:} We test the effect of gifting in multiple popular game-theoretic coordination games: Pure Coordination, Bach or Stravinsky, Assurance, and Stag Hunt.

We perform the majority of our analysis on Stag Hunt (low, medium, and high risk), as it contains both the payoff and risk-dominated equilibria. Assurance is similar to Stag Hunt but the payoff-dominant equilibrium is no longer risk-dominated. BoS and Pure Coordination have two equally prosocial PNE, so they are included in our experiments to demonstrate our method still reaches a PNE in those settings.

\smallskip
\noindent\textbf{$\mathbf{\mathit{N}}$-Player Environments:} We run experiments on Stag Hunt with more than two players, where the game is defined by a graph (similar to \cite{peysakhovich2018prosocial}). Each node in the graph represents an agent and the edges define the individual Stag Hunts to be played. Thus, each agent chooses an action to play with all neighbor agents and receives the average reward of the games. We specifically examine three-player and four-player fully connected graphs in the medium risk setting: FC-3 Stag Hunt and FC-4 Stag Hunt.

\smallskip
\noindent\textbf{Repeated Games:} We further investigate the effect of zero sum gifting in repeated interactions by running experiments on a Repeated Stag Hunt environment. In Repeated Stag Hunt, agents repeatedly play the medium risk, one-shot Stag Hunt over a finite horizon $H\!=\!10$. Each agent observes the most recent action taken by the other agent. This setting is interesting as agents have an expanded policy space: their policies are conditioned on other players' previous actions, and so new Nash equilibria may emerge with gifting. 


\subsection{Implementation Details} 

We use the payoff matrices shown in Section~\ref{sec:problem_definition}. Unless otherwise stated, we set $\gamma=10$. For all experiments, we train a Deep Q-Network (DQN) with independent $\epsilon$-greedy exploration for each agent. We use Adam optimizer with a learning rate of $5\times10^{-4}$. The replay buffer size is $10^5$. The $\epsilon$ for exploration begins at $0.3$ and exponentially decays to $0.01$ over $2\times 10^4$ steps. Each target network updates every $250$ episodes. For the one-shot games, all agents are given a constant observation of 0. We provide supplementary code for reproducibility of all the experiments. 


\begin{table}[ht]
\begin{center}

\begin{tabular}{lcc}\toprule
\textit{Environment} & \textit{Without Gifting} & \textit{With Gifting} \\ \midrule
Bach or Stravinsky & $\mathbf{100.0\%}$ & $\mathbf{100.0\%}$ \\ 
Pure Coordination & $\mathbf{100.0\%}$ & $\mathbf{100.0\%}$ \\ 
Assurance & $56.8\%$ & $\mathbf{63.3\%}$ \\ 
High Risk Stag Hunt & $0.0\%$ & $\mathbf{19.0\%}$ \\ 
Med. Risk Stag Hunt & $8.6\%$ & $\mathbf{21.4\%}$ \\ 
Low Risk Stag Hunt & $\mathbf{25.4\%}$ & $22.0\%$ \\ 
FC-3 Stag Hunt & $7.8\%$ & $\mathbf{12.1\%}$ \\ 
FC-4 Stag Hunt & $5.1\%$ & $\mathbf{7.4\%}$ \\
Repeated Stag Hunt & $0.0\%$ & $\mathbf{19.7\%}$ \\\bottomrule
\end{tabular}
\vspace{-5px}
\caption{The percentage of $\mathbf{1024}$ runs with random initializations that reached the prosocial equilibrium with multi-agent DQN.}
\vspace{-10px}
\label{tab:tables for other games}
\end{center}
\end{table}
\vspace{-5px}
\subsection{Results}\label{subsec:results}

As shown in Table \ref{tab:tables for other games}, zero-sum gifting increases the probability of converging to the most prosocial equilibrium in a variety of coordination games. In BoS and Pure Coordination, all equilibria are equally prosocial, and we converge to one of the equilibria $100\%$ of the time with and without gifting. In Assurance, the lack of risk makes the prosocial equilibrium a favourable outcome even without gifting, but we still see an improvement when adding gifting actions to the agents. In Stag Hunt we see that gifting has a greater benefit when risk is higher, but performance diminishes slightly in the low risk setting when gifting is introduced. This interdependence between varying risk and gifting is further explored later in this section. In the FC-3 and FC-4 Stag Hunts, we can see that gifting helps increase the probability of convergence to the prosocial equilibrium, but as the number of agents increases, it becomes more difficult to coordinate all agents and encourage risky prosocial behavior over safer actions. In Repeated Stag Hunt, gifting significantly increases the probability of convergence to the prosocial equilibrium. When compared to the results of the corresponding one-shot medium risk Stag Hunt, we can see the likelihood of agents coordinating at the prosocial equilibrium decreases both with and without gifting, implying that coordination over repeated instances of Stag Hunt is a more difficult, risky setting.

\smallskip
\noindent\textbf{Interdependence of Risk and Gift Value:} In Fig. \ref{fig:interdependence_risk_gift}, we examine the relation between the gift value and the risk value in Stag Hunt. The results show that in order for gifting to help in coordination games with higher risk, the gift value needs to increase to compensate for the added risk. 


We can also see a slight negative effect gifting has in low risk settings when training with DQN agents. One explanation for this is that adding gifting actions to agents expands their action space and makes exploration more difficult, and uncertainty in the other agent's actions favors the risk-dominant equilibrium. However, under high risk settings, agents are more likely to behave prosocially with gifting, even with increased difficulty in exploration. 

\begin{figure}[t]
    \centering
    \includegraphics[width=0.8\columnwidth]{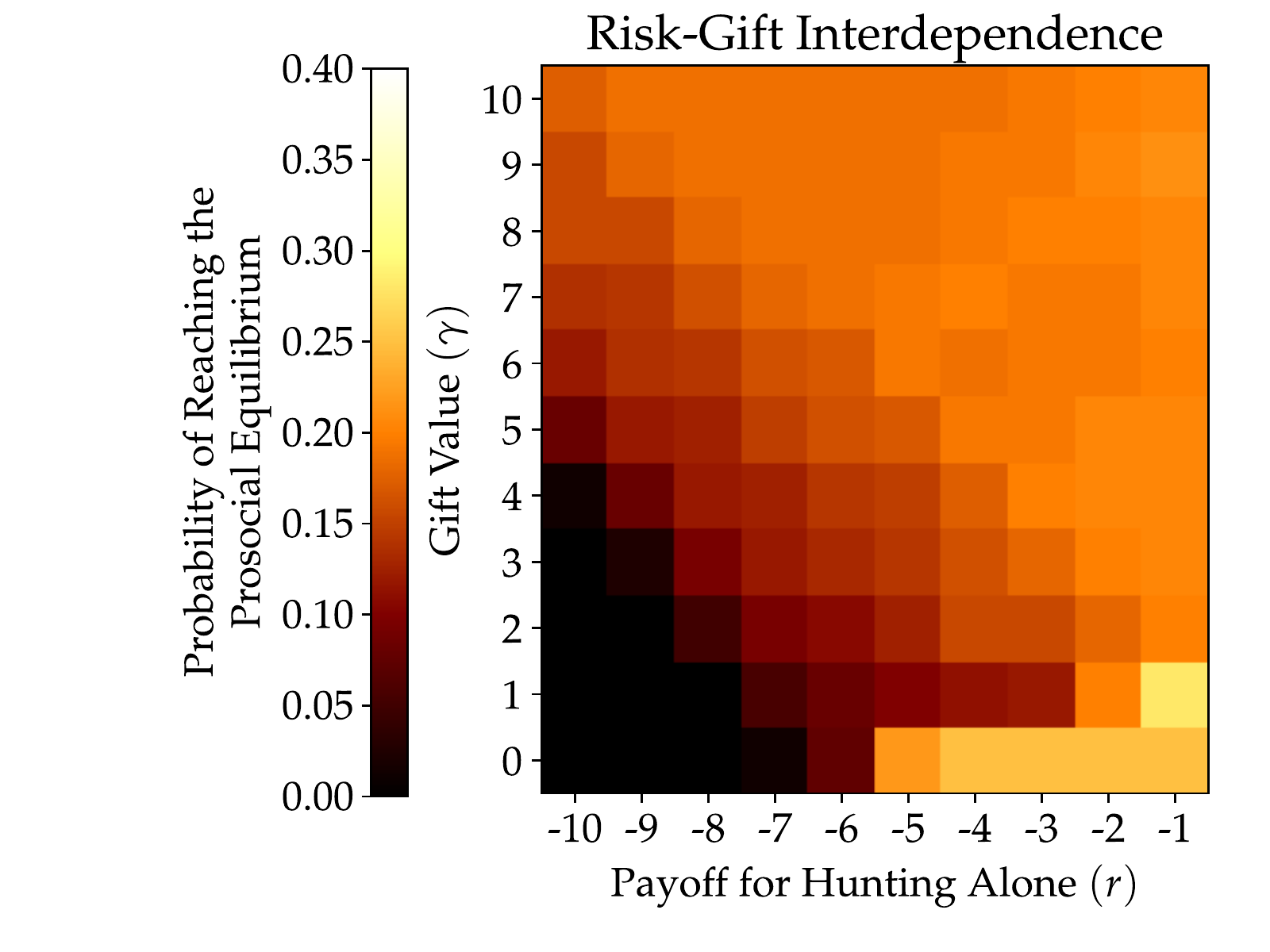}
    \vspace{-15px}
    \caption{This figure depicts the relationship between the risk and gift value in Stag Hunt. In Stag Hunt, the payoff for hunting alone $\mathbf{(r)}$ characterizes the risk. The results show that as risk increases, the gift value $\mathbf{\gamma}$ must increase proportionally in order to be risk-mitigating and improve convergence to the prosocial equilibrium.}
    \label{fig:interdependence_risk_gift}
    \vspace{-12px}
\end{figure}

\section{Discussion}

\noindent\textbf{Summary:} We formalize a zero-sum gifting mechanism and show that it often increases the probability of convergence to the prosocial equilibrium in coordination games. We prove that zero-sum gifting does not alter the behavior under Nash equilibria in one-shot normal-form games. With gifting, we show via numerical analysis that the prosocial equilibrium's basin of attraction grows in Stag Hunt and empirically validate these results with DQN in a broader set of environments.

\smallskip
\noindent\textbf{Limitations:} We analyze gifting as an alternative method for encouraging prosocial behavior compared to explicit reward shaping. In practice, gifting requires the ability to extend an environment's action space, so it can only be applied in settings where agents' action spaces can be modified.


Moreover, although our experimental results in Table \ref{tab:tables for other games} show that gifting negatively affects the low risk Stag Hunt setting when trained with DQN, the performance loss is marginal compared to the performance gain we see in higher risk settings. Nonetheless, one should be cautious when applying gifting, as the benefits are dependent on the risk in the respective environment.

\smallskip
\noindent\textbf{Future Work:} We focus the majority of our experiments on one-shot games, since we are interested in isolating the setting where no new equilibria are introduced by the gifting actions. We provide brief experiments of gifting in the repeated game setting, but further exploring the emergence of complex behaviors involving gifting in repeated interactions can help shed light on what settings gifting would be most beneficial. 


\section*{Acknowledgments}
We would like to thank NSF EPCN grant \#1952920 and the DARPA HiCon-Learn project for their support.

\bibliographystyle{named}
\balance\small\bibliography{refs}

\begin{thebibliography}{}

\bibitem[\protect\citeauthoryear{Balcan \bgroup \em et al.\egroup
  }{2015}]{balcan2015learning}
Maria-Florina Balcan, Ariel~D Procaccia, and Yair Zick.
\newblock Learning cooperative games.
\newblock In {\em Proceedings of the 24th International Conference on
  Artificial Intelligence}, pages 475--481, 2015.

\bibitem[\protect\citeauthoryear{Beliaev \bgroup \em et al.\egroup
  }{2020}]{beliaev2020emergent}
Mark Beliaev, Woodrow~Z. Wang, Daniel~A. Lazar, Erdem Biyik, Dorsa Sadigh, and
  Ramtin Pedarsani.
\newblock Emergent correlated equilibrium through synchronized exploration.
\newblock In {\em RSS 2020 Workshop on Emergent Behaviors in Human-Robot
  Systems}, July 2020.

\bibitem[\protect\citeauthoryear{Bıyık \bgroup \em et al.\egroup
  }{2018}]{biyik2018altruistic}
Erdem Bıyık, Daniel~A. Lazar, Ramtin Pedarsani, and Dorsa Sadigh.
\newblock Altruistic autonomy: Beating congestion on shared roads.
\newblock {\em Algorithmic Foundations of Robotics XIII}, page 887–904, 2018.

\bibitem[\protect\citeauthoryear{Foerster \bgroup \em et al.\egroup
  }{2018a}]{foerster2017lola}
Jakob Foerster, Richard~Y Chen, Maruan Al-Shedivat, Shimon Whiteson, Pieter
  Abbeel, and Igor Mordatch.
\newblock Learning with opponent-learning awareness.
\newblock In {\em Proceedings of the 17th International Conference on
  Autonomous Agents and MultiAgent Systems}, pages 122--130, 2018.

\bibitem[\protect\citeauthoryear{Foerster \bgroup \em et al.\egroup
  }{2018b}]{foerster2017coma}
Jakob~N. Foerster, Gregory Farquhar, Triantafyllos Afouras, Nantas Nardelli,
  and Shimon Whiteson.
\newblock Counterfactual multi-agent policy gradients.
\newblock In {\em Thirty-Second AAAI Conference on Artificial Intelligence},
  2018.

\bibitem[\protect\citeauthoryear{Harsanyi and Selten}{2001}]{Harsanyi}
John Harsanyi and Reinhard Selten.
\newblock {\em A General Theory of Equilibrium in Games}, volume~18.
\newblock MIT Press, 01 2001.

\bibitem[\protect\citeauthoryear{Hughes \bgroup \em et al.\egroup
  }{2018}]{hughes2018inequity}
Edward Hughes, Joel~Z Leibo, Matthew Phillips, Karl Tuyls, Edgar
  Due{\~n}ez-Guzman, Antonio~Garc{\'\i}a Casta{\~n}eda, Iain Dunning, Tina Zhu,
  Kevin McKee, Raphael Koster, et~al.
\newblock Inequity aversion improves cooperation in intertemporal social
  dilemmas.
\newblock In {\em Advances in neural information processing systems}, pages
  3326--3336, 2018.

\bibitem[\protect\citeauthoryear{Iqbal and Sha}{2019}]{iqbal2020coordinated}
Shariq Iqbal and Fei Sha.
\newblock Coordinated exploration via intrinsic rewards for multi-agent
  reinforcement learning.
\newblock {\em arXiv preprint arXiv:1905.12127}, 2019.

\bibitem[\protect\citeauthoryear{Lazar \bgroup \em et al.\egroup
  }{2019}]{lazar2019learning}
Daniel~A Lazar, Erdem B{\i}y{\i}k, Dorsa Sadigh, and Ramtin Pedarsani.
\newblock Learning how to dynamically route autonomous vehicles on shared
  roads.
\newblock {\em arXiv preprint arXiv:1909.03664}, 2019.

\bibitem[\protect\citeauthoryear{Leibo \bgroup \em et al.\egroup
  }{2017}]{leibo2017multi}
Joel~Z Leibo, Vinicius Zambaldi, Marc Lanctot, Janusz Marecki, and Thore
  Graepel.
\newblock Multi-agent reinforcement learning in sequential social dilemmas.
\newblock In {\em Proceedings of the 16th Conference on Autonomous Agents and
  MultiAgent Systems}, pages 464--473, 2017.

\bibitem[\protect\citeauthoryear{Letcher \bgroup \em et al.\egroup
  }{2019}]{letcher2018stable}
Alistair Letcher, Jakob Foerster, David Balduzzi, Tim Rocktäschel, and Shimon
  Whiteson.
\newblock Stable opponent shaping in differentiable games.
\newblock In {\em International Conference on Learning Representations}, 2019.

\bibitem[\protect\citeauthoryear{Lowe \bgroup \em et al.\egroup
  }{2017}]{lowe2017maddpg}
Ryan Lowe, Yi~I Wu, Aviv Tamar, Jean Harb, OpenAI~Pieter Abbeel, and Igor
  Mordatch.
\newblock Multi-agent actor-critic for mixed cooperative-competitive
  environments.
\newblock In {\em Advances in neural information processing systems}, pages
  6379--6390, 2017.

\bibitem[\protect\citeauthoryear{Lupu and Precup}{2020}]{lupu2020gifting}
Andrei Lupu and Doina Precup.
\newblock Gifting in multi-agent reinforcement learning.
\newblock In {\em International Conference on Autonomous Agents and Multiagent
  Systems}, 2020.

\bibitem[\protect\citeauthoryear{Matignon \bgroup \em et al.\egroup
  }{2012}]{matignon_laurent_le}
Laetitia Matignon, Guillaume Laurent, and Nadine Fort-Piat.
\newblock Independent reinforcement learners in cooperative markov games: A
  survey regarding coordination problems.
\newblock {\em The Knowledge Engineering Review}, 27:1 -- 31, 03 2012.

\bibitem[\protect\citeauthoryear{Mnih \bgroup \em et al.\egroup
  }{2013}]{mnih2013atari}
Volodymyr Mnih, Koray Kavukcuoglu, David Silver, Alex Graves, Ioannis
  Antonoglou, Daan Wierstra, and Martin Riedmiller.
\newblock Playing atari with deep reinforcement learning.
\newblock In {\em NIPS 2013 Deep Learning Workshop}, 2013.

\bibitem[\protect\citeauthoryear{Nica \bgroup \em et al.\egroup
  }{2017}]{nica2017learning}
Andrei~Cristian Nica, Tudor Berariu, Florin Gogianu, and Adina~Magda Florea.
\newblock Learning to maximize return in a stag hunt collaborative scenario
  through deep reinforcement learning.
\newblock In {\em 2017 19th International Symposium on Symbolic and Numeric
  Algorithms for Scientific Computing (SYNASC)}, pages 188--195. IEEE, 2017.

\bibitem[\protect\citeauthoryear{Panait \bgroup \em et al.\egroup
  }{2008}]{panait2008theoretical}
Liviu Panait, Karl Tuyls, and Sean Luke.
\newblock Theoretical advantages of lenient learners: An evolutionary game
  theoretic perspective.
\newblock {\em Journal of Machine Learning Research}, 9(Mar):423--457, 2008.

\bibitem[\protect\citeauthoryear{Peysakhovich and
  Lerer}{2018}]{peysakhovich2018prosocial}
Alexander Peysakhovich and Adam Lerer.
\newblock Prosocial learning agents solve generalized stag hunts better than
  selfish ones.
\newblock In {\em International Conference on Autonomous Agents and Multiagent
  Systems}, 2018.

\bibitem[\protect\citeauthoryear{Sadigh \bgroup \em et al.\egroup
  }{2016}]{sadigh2016planning}
Dorsa Sadigh, S.~Shankar Sastry, Sanjit~A. Seshia, and Anca~D. Dragan.
\newblock Planning for autonomous cars that leverage effects on human actions.
\newblock In {\em Proceedings of Robotics: Science and Systems (RSS)}, June
  2016.

\bibitem[\protect\citeauthoryear{Sadigh \bgroup \em et al.\egroup
  }{2018}]{sadigh2018planning}
Dorsa Sadigh, Nick Landolfi, S.~Shankar Sastry, Sanjit~A. Seshia, and Anca~D.
  Dragan.
\newblock Planning for cars that coordinate with people: Leveraging effects on
  human actions for planning and active information gathering over human
  internal state.
\newblock {\em Autonomous Robots (AURO)}, 42(7):1405--1426, October 2018.

\bibitem[\protect\citeauthoryear{Shih \bgroup \em et al.\egroup
  }{2021}]{shih2021on}
Andy Shih, Arjun Sawhney, Jovana Kondic, Stefano Ermon, and Dorsa Sadigh.
\newblock On the critical role of conventions in adaptive human-{\{}ai{\}}
  collaboration.
\newblock In {\em International Conference on Learning Representations}, 2021.

\bibitem[\protect\citeauthoryear{Shirado and
  Christakis}{2017}]{shirado2018locally}
Hirokazu Shirado and Nicholas~A Christakis.
\newblock Locally noisy autonomous agents improve global human coordination in
  network experiments.
\newblock {\em Nature}, 545(7654):370--374, 2017.

\bibitem[\protect\citeauthoryear{Shum \bgroup \em et al.\egroup
  }{2019}]{shum2019theory}
Michael Shum, Max Kleiman-Weiner, Michael~L Littman, and Joshua~B Tenenbaum.
\newblock Theory of minds: Understanding behavior in groups through inverse
  planning.
\newblock In {\em Proceedings of the AAAI Conference on Artificial
  Intelligence}, volume~33, pages 6163--6170, 2019.

\bibitem[\protect\citeauthoryear{Silver \bgroup \em et al.\egroup
  }{2017}]{alphazero}
David Silver, Thomas Hubert, Julian Schrittwieser, Ioannis Antonoglou, Matthew
  Lai, Arthur Guez, Marc Lanctot, Laurent Sifre, Dharshan Kumaran, Thore
  Graepel, Timothy Lillicrap, Karen Simonyan, and Demis Hassabis.
\newblock Mastering chess and shogi by self-play with a general reinforcement
  learning algorithm, 2017.

\bibitem[\protect\citeauthoryear{Sutton \bgroup \em et al.\egroup
  }{1999}]{sutton1999}
Richard~S. Sutton, David McAllester, Satinder Singh, and Yishay Mansour.
\newblock Policy gradient methods for reinforcement learning with function
  approximation.
\newblock In {\em Proceedings of the 12th International Conference on Neural
  Information Processing Systems}, NIPS’99, page 1057–1063, Cambridge, MA,
  USA, 1999. MIT Press.

\bibitem[\protect\citeauthoryear{Van~Huyck \bgroup \em et al.\egroup
  }{1990}]{vanhyuck}
John~B Van~Huyck, Raymond~C Battalio, and Richard~O Beil.
\newblock {Tacit Coordination Games, Strategic Uncertainty, and Coordination
  Failure}.
\newblock {\em American Economic Review}, 80(1):234--248, March 1990.

\bibitem[\protect\citeauthoryear{Xie \bgroup \em et al.\egroup
  }{2020}]{xie2020learning}
Annie Xie, Dylan Losey, Ryan Tolsma, Chelsea Finn, and Dorsa Sadigh.
\newblock Learning latent representations to influence multi-agent interaction.
\newblock In {\em Proceedings of the 4th Conference on Robot Learning (CoRL)},
  November 2020.

\bibitem[\protect\citeauthoryear{Zhu \bgroup \em et al.\egroup
  }{2020}]{zhu2020multi}
Zheqing Zhu, Erdem Biyik, and Dorsa Sadigh.
\newblock Multi-agent safe planning with gaussian processes.
\newblock In {\em Proceedings of the IEEE/RSJ International Conference on
  Intelligent Robots and Systems (IROS)}, October 2020.

\end{thebibliography}

\clearpage
\appendix
\section{Sub-classes of Coordination Games}
We formally define each sub-class of coordination game that we provide experimental results for in Table \ref{tab:tables for other games}. We define the conditions for each coordination game, as well as provide a concrete example in the form of a payoff matrix. 
\subsection{Pure Coordination} 

\begin{minipage}{.435\linewidth}
  \centering
    \begin{itemize}
        \item $b=B=c=C=\zeta$
        \item $a=d, A=D$
        \item $\zeta<\min(a,A)$
    \end{itemize}
\end{minipage}%
\begin{minipage}{.52\linewidth}
  \centering
    \begin{tabular}{ | c | c | c | } 
    \hline
     & Action 1 & Action 2\\ 
    \hline
    Action 1 & $1,1$ & $0,0$ \\ 
    \hline
    Action 2 & $0,0$ & $1,1$ \\ 
    \hline
\end{tabular}
\end{minipage}
\vspace{5px}

In the simplest type of coordination game, there does not exist a payoff-dominant equilibrium, as both PNE give identical payoffs and are equally prosocial. There is also no additional risk associated with choosing Action 2 as opposed to Action 1, and hence, we expect randomly initialized learning agents to converge to either equilibrium with equal probability. 

\subsection{Bach or Stravinsky (BoS)}
\begin{minipage}{.435\linewidth}
  \centering
    \begin{itemize}
        \item $b=B=c=C=\zeta$
        \item $a>d, A<D$
        \item $\zeta<\min(a,A)$
    \end{itemize}
\end{minipage}%
\begin{minipage}{.52\linewidth}
  \centering
    \begin{tabular}{ | c | c | c | } 
    \hline
     & Action 1 & Action 2\\ 
    \hline
    Action 1 & $2,1$ & $0,0$ \\ 
    \hline
    Action 2 & $0,0$ & $1,2$ \\ 
    \hline
\end{tabular}
\end{minipage}
\vspace{5px}

In BoS, the PNE are not identical, and neither PNE is payoff-dominant. The row player prefers (Action 1, Action 1) and the column player prefers (Action 2, Action 2). When $a=D$, $A=d$, we consider either PNE to be the prosocial equilibrium, as the sum of rewards among players is identical. In this case, both equilibria have the same risk, and we expect randomly initialized learning agents to converge to either equilibrium with equal probability. 

\subsection{Assurance}
\begin{minipage}{.435\linewidth}
  \centering
    \begin{itemize}
        \item $b=B=c=C=\zeta$
        \item $a>d, A>D$
        \item $\zeta<\min(a,A)$
    \end{itemize}
\end{minipage}%
\begin{minipage}{.52\linewidth}
  \centering
    \begin{tabular}{ | c | c | c | } 
    \hline
     & Action 1 & Action 2\\ 
    \hline
    Action 1 & $2,2$ & $0,0$ \\ 
    \hline
    Action 2 & $0,0$ & $1,1$ \\ 
    \hline
\end{tabular}
\end{minipage}
\vspace{5px}

In the game of Assurance, the payoff-dominant PNE is (Action 1, Action 1). It is also risk-dominant, i.e., even if an agent thinks their partner may not coordinate by taking the same action, there is no incentive to go for Action 2. This makes it easy for agents to reach the payoff-dominant equilibrium.

\section{Proof of Lemma 1}

\firstlemma*

\begin{proof}
For any $\boldsymbol{\bar{s}}_{-i}=(\boldsymbol{s}_{-i},\boldsymbol{g}_{-i})\in\boldsymbol{\bar{S}}_{-i}$, the payoff for agent $i$ under the action $(s_i,g_i)$ is
\begin{equation}
\begin{split}
    \bar{\mu}_i(\boldsymbol{\bar{s}}) &= \mu_i(\boldsymbol{s}) + \sigma_i(\boldsymbol{g}) = \mu_i(s) - g_i + \frac{1}{N-1}\sum_{j\in -i}g_j
\end{split}
\end{equation}
If agent $i$ had $(s_i,0)$, its payoff would be $\mu_i(\boldsymbol{s}) + \frac{1}{N-1}\sum_{j\in -i}g_j$, which is strictly larger as $g_i>0$, regardless of $\boldsymbol{\bar{s}}_{-i}$. Hence, $(s_i,g_i)$ is strictly dominated by $(s_i,0)$, and this completes the proof.
\end{proof}

\section{Proof of Proposition 1}
\firstproposition*
\begin{proof}
We already know from Corollary~\ref{corollary1} that actions involving non-zero gifting cannot exist in the PNE of $\bar{M}$, so the latter statement is true.
We now prove the former statement. First, we define $\mathbf{\bar{S}_0}$ by appending $0$-gifting actions to the action sets in $\mathbf{S_{PNE}}$:
\begin{equation}
    \mathbf{\bar{S}_0} = \left\{\bigtimes_{i=1}^N (s_i,0) \mid \boldsymbol{s}\in\mathbf{S_{PNE}}\right\}\;.
\end{equation}
Next, we show that $\mathbf{\bar{S}_0} = \mathbf{\bar{S}_{PNE}}$, proving the first statement.
\begin{equation*}
    \begin{split}
        &\forall \boldsymbol{\bar{s}}\in \mathbf{S_0} \text{ and } \forall i\in\{1,2,\ldots,N\}:\\
        &\bar{\mu}_i(\boldsymbol{\bar{s}})=\mu_i(\boldsymbol{s})+\sigma_i\left((0,0,\dots,0)\right)=\mu_i(\boldsymbol{s})+0
    \end{split}
\end{equation*}
Because $\boldsymbol{s}\in\mathbf{S_{PNE}}$, we have
\begin{equation*}
    \forall s_i' \in S_i:\mu_i(\boldsymbol{s})\geq \mu_i(s_i',\boldsymbol{s}_{-i})\;,
\end{equation*}
implying changing $s_i$ only does not increase the payoff for agent $i$. Moreover, we know from Lemma~\ref{lemma1} that any non-zero gifting action is strictly dominated by the corresponding zero-gifting action. These two results mean changing the gifting action $g_i$, the original action $s_i$, or both cannot increase the payoff for agent $i$:
\begin{equation}
    \forall \bar{s}_i'\in \bar{S}_i: \bar{\mu}(\boldsymbol{\bar{s}})\geq \bar{\mu}(\bar{s}_i',\boldsymbol{\bar{s}}_{-i})\;,
\end{equation}
and therefore $\mathbf{\bar{S}_0} = \mathbf{\bar{S}_{PNE}}$.
\end{proof}

\section{Gradients of the Dynamical System}\label{sec:gradients}
\begin{figure*}[h]
    \centering
    \includegraphics[width=\textwidth]{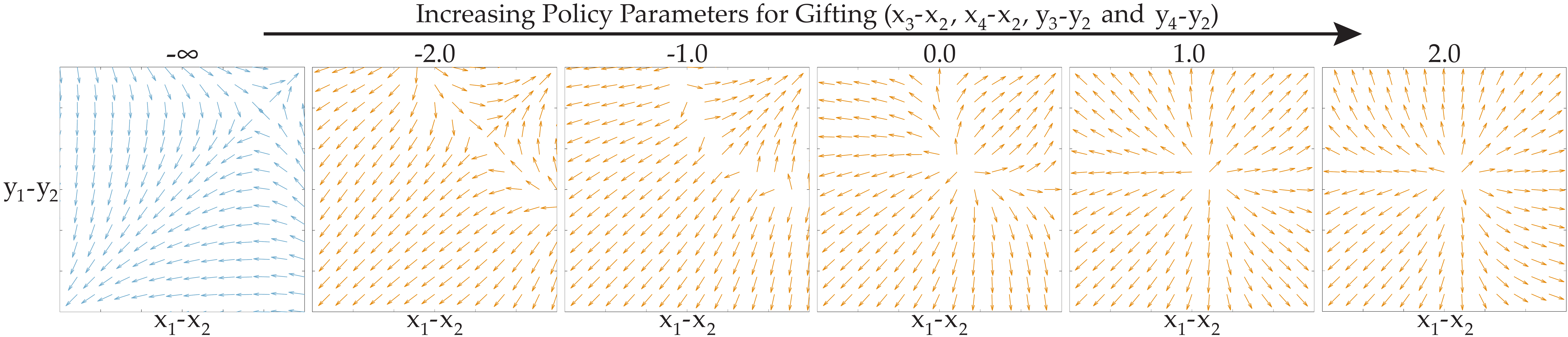}
    \caption{Phase portraits of the formulated dynamical system under various gifting parameters. Note the left-most figure shows the system of the game without gifting. Axes show $[-3,3]$ for each plot.}
    \label{fig:portraits}
\end{figure*}

Figure~\ref{fig:portraits} shows the \emph{normalized} gradients of the system that govern the dynamics for various parameters of gifting actions ($x_3$, $x_4$, $y_3$ and $y_4$). Again, as only the differences between parameters are important, we vary the gifting parameters with respect to $x_2$ and $y_2$. Since the prosocial equilibrium is reached with $x_1-x_2=y_1-y_2=+\infty$ and the risk-dominated equilibrium with $x_1-x_2=y_1-y_2=-\infty$, these phase portraits show the two regions of states that would be updated to move towards either of the equilibria. It can be seen that higher gifting parameters enlarge the region that moves towards the prosocial equilibrium.

It should be noted that while Fig.~\ref{fig:portraits} gives a picture of system dynamics, it is limited in two aspects: first, it does not provide any information about what happens when the gifting parameters are not equal to each other. Second, the gradients of individual states only give information about one-step updates learning agents would have. However, because the gifting parameters will also be learned, Fig.~\ref{fig:portraits} does not show the initial states that will reach the prosocial equilibrium. Therefore, the basin of attraction analyses we made in Section~\ref{subsec:behavior} gives a more accurate picture.

\section{Transient Gifting Actions}

\begin{figure}
    \centering
    \includegraphics[width=\columnwidth]{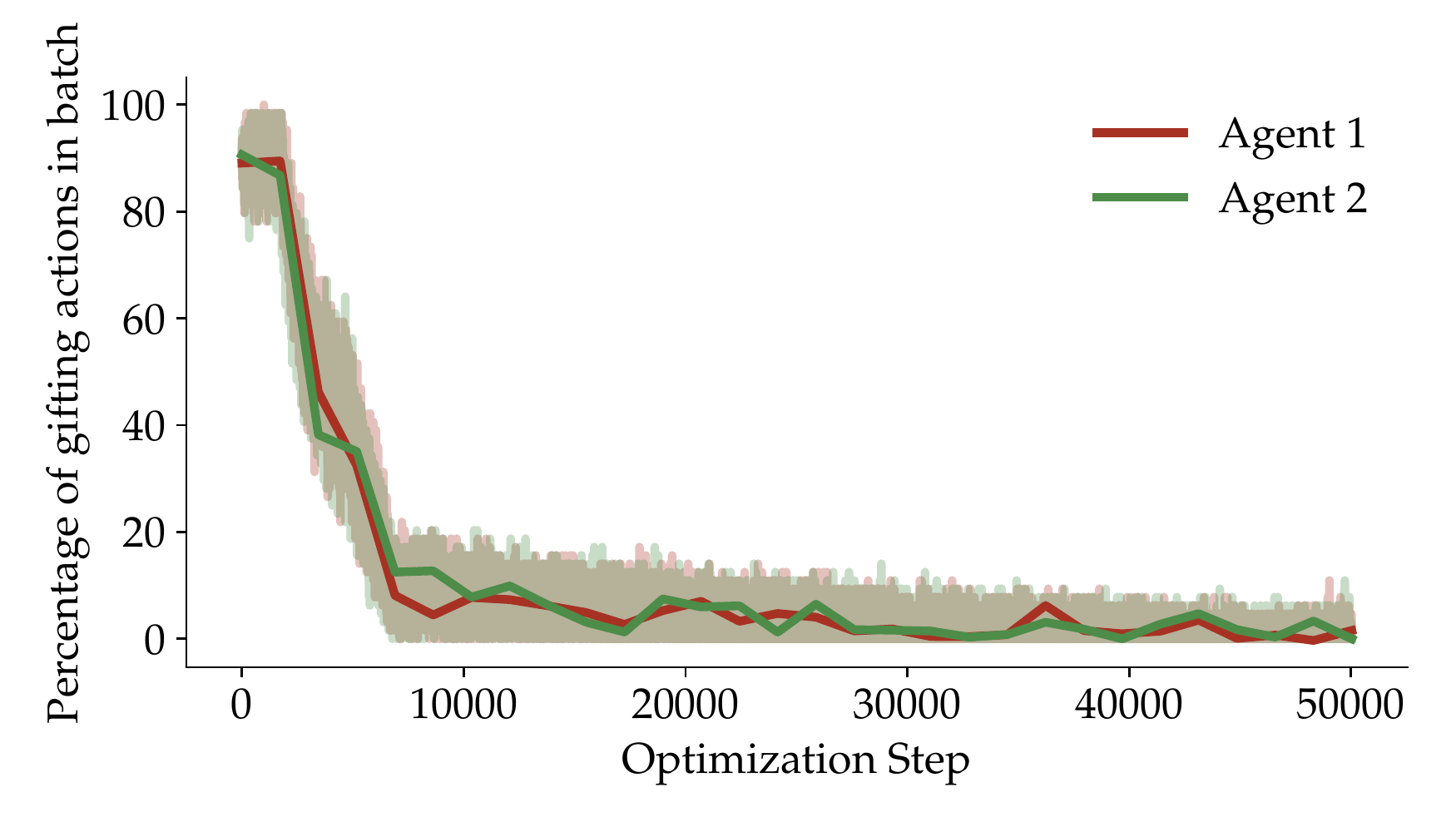}
    \caption{This plot shows the percentage of gifting actions in a batch vs. the optimization step during one training run that starts with frequent gifting actions and reaches the prosocial equilibrium in Stag Hunt. Both agents are initialized to have a higher Q-value for the gifting actions and equal Q-value for the non-gifting actions. This provides an empirical example of Lemma 1 in practice, where agents gift initially to encourage prosocial behavior, and learn not to gift in the limit, while reaching the prosocial equilibrium. }
    \label{fig:analyzegifting}
\end{figure}

As Proposition 1 shows, zero-sum gifting does not introduce any new equilibria to one-shot normal-form games. Thus, we investigate the usage of gifting actions at train time to provide insight on how gifting encourages agents to be prosocial. Fig. \ref{fig:analyzegifting} shows that, even when agents start with frequent zero-sum gifting actions, they use gifting as transient actions during training to encourage other agents to update towards the more prosocial equilibrium. As the agents optimize their own parameters selfishly, the agents take the gifting actions less frequently, and their final converged policies never include gifting actions in the case of one-shot normal-form games. 

\section{Compute Details}

The basin of attraction code ran on an Elastic Compute Cloud (EC2) instance in Amazon Web Services (AWS) with 16 vCPUs and 30 GB RAM. Each run took between 2 and 24 hours depending on how fast the agents converge to equilibria.

The DQN training code ran on a personal computer with an 8C/16T processor and 32 GB RAM. Figure \ref{fig:interdependence_risk_gift} took 36 hours to complete. Each result in Table \ref{tab:tables for other games} took around 2 hours to complete.

\end{document}